 \documentclass[conference,final,twocolumn]{IEEEtran}
\bibliographystyle{IEEEtran}
\usepackage{mathrsfs}
\usepackage{algorithm}
\usepackage{algorithmic}
\usepackage{graphicx}
\usepackage{cite}
\usepackage{citesort}
\usepackage{color}
\usepackage{psfrag}
\usepackage{subfigure}
\usepackage{amssymb}
\usepackage{epsfig}
\usepackage{pifont}
\usepackage{amsmath}
\usepackage{array}
\usepackage{multicol}
\usepackage{multirow}
\usepackage{pifont}
\usepackage{indentfirst} 
\usepackage{amsfonts}
\usepackage{fancyhdr}
\usepackage{amscd}
\usepackage{bm}
\usepackage{tikz}
\usepackage{xcolor}
\usepackage{etoolbox}
\usetikzlibrary{chains,fit,shapes}

\usetikzlibrary{decorations}
\usetikzlibrary{decorations.pathreplacing}
\usetikzlibrary{calc}
\usetikzlibrary{arrows}

\newtoggle{quickdecim}

\newtheorem{proposition}{Proposition}



\IEEEoverridecommandlockouts
%


\begin{document}
	
	\title{Role of Large Scale Channel Information on Predictive Resource Allocation}
\author{
	\authorblockN{\large{Chuting Yao and Chenyang Yang}}
	\vspace{0.2cm}
	\authorblockA{Beihang University, Beijing China\\
		Email: \{ctyao, cyyang\}@buaa.edu.cn }
	\thanks{This work was supported by National Natural
		Science Foundation of China under Grant 61120106002 and National Basic
		Research Program of China under Grant 2012CB316003.}
}

%
\maketitle

\begin{abstract}
When the future achievable rate is perfectly known, predictive resource allocation can provide high performance gain over traditional resource allocation for the traffic without stringent delay requirement. However, future channel information is hard to obtain in wireless channels, especially the small-scale fading gains. In this paper,
we analytically demonstrate that the future large-scale channel information can capture almost all the performance gain from knowing the future channel by taking an energy-saving resource allocation as an example. This result is important for practical systems, since large-scale channel gains can be easily estimated from the predicted trajectory of mobile users and radio map. Simulation results validate our analysis and illustrate the impact of the estimation errors of large-scale channel gains
on energy saving.
\end{abstract}

\section{Introduction}
As the smart phone popularizes, improving the performance of mobile networks such as energy efficiency (EE) by exploiting context information
has drawn attentions  \cite{Abou2013Predictive,abou2014toward,Abou2014Energy,Draxler2014Anticipatory,yao2015context}.

Context information can be classified into application (e.g.,
quality of service), network (e.g., congestion status), and
user (e.g., trajectory of a mobile user) levels \cite{Choongul2012concept}. The location and trajectory of a  user in the future can be predicted from analyzing the user behavior
\cite{Skog2009Intelligent}, and the bandwidth of a network can be predicted by a base station (BS) from the traffic measurements in the past \cite{Zheng2013Optimizing}. For the
traffic not having an urgent deadline for transmission, such as pre-subscribed file downloading, content pushing, and on-demand video streaming delivery, resource planning with user level context information, or predictive resource allocation, can save  energy and improve performance of a network significantly \cite{Abou2014Energy,abou2014toward,Draxler2014Anticipatory,yao2015context,Riiser2012Video,Zheng2013Optimizing}. With
perfect achievable rate prediction, i.e., assuming perfect future channel information, 
half energy can be saved \cite{abou2014toward}.

However, in wireless channels the small-scale channel gains in the future are hard to know accurately, where the channel prediction errors depend on the channel coherence time. By contrast, the large-scale channel gains can be estimated from user locations and the radio map \cite{abou2014toward}, which however has estimation errors due to the prediction error of locations and the measurement error of the signal strength as reported by \cite{Abou2015Evaluating}. 
Recently, the simulation results provided in \cite{yao2015context} demonstrate that the power-saving gain obtained from knowing the future large-scale channel gains is almost the same as that provided by perfectly knowing all the future instantaneous channel gains. This observation motivates a natural question to ask: under what condition can we only use large-scale channel information to achieve the potential of predictive resource allocation? 

In this paper, we strive to answer the question by examining the solution of an energy-saving predictive resource allocation. To reveal the essential role of the large-scale channel information, we consider a traffic with long deadline, which is modeled as transmitting a given amount of data in a long
duration, and simplify the system model in  \cite{yao2015context} to a single user scenario without background traffic. From the optimal solution of an energy minimization problem with perfect future channel information, we show that only
two key parameters in the solution, threshold and water-filling level, depend on the future channels.
By deriving the asymptotic distribution of these two parameters, we show that the threshold and water-filling level can be estimate accurately with large-scale channel gains.
Simulation results validate our analysis, and show that imperfect large-scale channel information causes minor performance degradation for predictive resource allocation.

\section{System Model}
Consider a multi-cell system, where each BS is equipped with $N_t$
antennas, and transmits  in a time-slotted fashion. The duration of each time slot is $\Delta_t$.  A single-antenna user moves across the cells, who requests to download a file with $B$ bits, which needs to be conveyed before a long deadline with duration $T\Delta_t$.

We divide the duration $T\Delta_t$ into $T_f$ frames, and each frame into $T_s$ time slots.
Hence, the duration contains $T= T_fT_s$ time slots.
The large-scale channel may vary among different frames due to user mobility. The small-scale channel is modeled as block fading, which remains constant in each time slot and may vary among time slots.
For emphasizing the role of large-scale channel information and for mathematical tractability, we assume that the user is only accessed to the closest BS, and denote $m^t\in \{1,0\}$ as the scheduling indicator. When $m^t=1$, the user is scheduled by the BS, otherwise it is not. Then, the received signal at the user in $t$th time slot is \vspace{-1mm}
\begin{equation}\label{E:signal}
\textstyle y^t =  m^t \sqrt{\alpha^{\lceil\frac{t}{T_s}\rceil}} ({\bf h}^t)^H {\bf
	w}^t\sqrt{p^t} x^t
+ n^t,
\end{equation}
where $x^t$ is the transmit symbol with $\mathbb{E}\{|x^t|^2\}=1$, $p^t$ is the transmit power, ${\bf{w}}^{t} \in \mathbb{C}^{N_t \times	1}$ is the beamforming vector, ${\bf{h}}^{t} \in \mathbb{C}^{N_t
\times 1}$ is the Rayleigh fading channel vector between the user and its closest BS with independent and identically distributed (i.i.d.) elements, $\alpha^{\lceil\frac{t}{T_s}\rceil}$ is the large-scale channel gain including path loss and shadowing, and $n^t$ is
zero-mean additive white Gaussian noise with variance $\sigma^2$.
$\mathbb E\{\cdot\}$ represents expectation, and $\lceil\cdot\rceil$ is the ceiling function.
Since single user is scheduled in each time slot, maximum ratio transmission
is optimal, i.e., ${\bf{w}}^{t} = {\bf{h}}^{t}
/\|{\bf{h}}^{t} \|$, where $\|\cdot\|$ denotes the Euclidean norm.

For notational simplicity, we consider unit bandwidth. In the $t$th time slot, the achievable rate in nats is \vspace{-1mm}
\begin{equation}
R^t = m^t \ln (1+g^t p^t),
\end{equation}
where $g^t\triangleq \alpha^{\lceil\frac{t}{T_s}\rceil} \|{\bf{h}}^{t}\|^2/\sigma^2$ is the \emph{equivalent channel gain}.

Assume that a BS can be switched into sleep mode when
the BS dose not serve the user in a time slot. The total power
consumed at the BS in the $t$th time slot can be modeled as \vspace{-1mm}
\begin{align}\label{E:PowerModel}
\textstyle p^t_{\rm tot} = \frac{1}{\xi} p^t + m^t(p_{\rm act}-p_{\rm sle}) +p_{\rm sle},
\end{align}
where  $\xi$ is the power amplifier efficiency, $p_{\rm act}$ and
$p_{\rm sle}$ are the circuit power consumptions when the BS is in active and
sleep modes, respectively.

\section{Resource Allocation with Perfect Channel Information}
To capture the essential role of large-scale channel gain in
resource allocation for conveying the $B$ bits (i.e., $B\ln2$ nats) in a long deadline, we first formulate an energy minimization problem with perfect
large-scale
and small-scale channel information in all time slots. The optimal scheduling and power allocation to minimize the total energy consumed in
the $T$ time slots can be found from the following problem,
\begin{subequations}\label{P:original}
	\begin{align}
	\min_{p^1,\ldots,p^T, m^1,\ldots, m^T} &\textstyle\sum_{t=1}^{T} p^t_{\rm tot} \Delta_t\\ \label{P:1}
	s.t. & \textstyle \sum_{t=1}^{T} m^t \ln (1+p^tg^t) = \frac{B\ln 2}{\Delta_t} \triangleq R,\\ \label{P:2}
	&p^t\geq 0,  m^t\in \{1,0\}, \quad t=1,\cdots, T,
	\end{align}
\end{subequations}
where \eqref{P:1} is the constraint on transmitting the $B$ bits within the duration $T \Delta_t$.
To simplify the analysis later, the maximal power constraint is not considered.

Problem \eqref{P:original} can be solved using similar method as in \cite{yao2015context}, which is summarized as the following two steps.

In the first step, scheduling and power allocation is optimized for a given number of active time slots $N \triangleq \sum_{t=1}^{T}m^t$. Since the circuit power consumption is given when $N$ is given, we only need to minimize the total transmit power in the $T$ time slots. Therefore, the time slots with large equivalent channel gains are selected to transmit, i.e., the scheduling indicator can be obtained as $m^t= {\bf 1}(g^t\geq g_{\rm th})$, where  $g_{\rm th}$ is a threshold and ${\bf 1 }(x)=1$ when the event $x$ is true, otherwise, ${\bf 1 }(x)=0$. The power allocation minimizing the total transmit power can be found from a standard power allocation problem, whose solution is
\begin{equation}\label{E:solution_x}
\textstyle p^{t} = \big({\nu  - \frac{1}{g^t}}\big){\bf 1}({g^t\geq \max\{g_{\rm th},\frac{1}{\nu}\}}),~t=1,\cdots, T,
\end{equation}
where $\nu$ is the water-filling level satisfying \vspace{-2mm}
\begin{align}\label{E:nu_x}
\textstyle \nu
= {\rm exp}\big(\frac{R}{L}-\frac{1}{L}{\sum_{\cal L}
		\ln g^t }\big),
\end{align}
 $L$ is the number of  time slots allocated with non-zero power among the scheduled time slots, and $\cal L$ is the set of the corresponding time slots.

\emph{Remark 1}: Since $N = \sum_{t=1}^{T}m^t=\sum_{t=1}^{T} {\bf 1}(g^t\geq g_{\rm th})$ and considering \eqref{E:solution_x}, if $\nu g_{\rm th}\geq 1$, $L = N$ time slots will be allocated with non-zero power, otherwise $L<N$.

In the second step, the  number of scheduled time slots $N$ is optimized to minimize the total energy consumption. Then, the optimal threshold $g^{*}_{\rm th}$ can be obtained by selecting $N^*$ time slots with largest equivalent channel gains, and optimal water-filling level $\nu^*$ can be obtained from \eqref{E:nu_x} by setting $L = N^*$. The optimal scheduling can be obtained as $m^{t*}= {\bf 1}(g^t\geq g^*_{\rm th})$, and the optimal transmit power $p^{t*}$ can be obtained from \eqref{E:solution_x} with the optimized water-filling level $\nu^*$.

We can observe from the optimal solution of problem \eqref{P:original} that the power allocated in the
$t$th time slot depends on the  equivalent channel gain $g^t$ in this time
slot, as well as the channel information in
other time slots implicitly included in the optimal water-filling level
$\nu^*$ and threshold $g^*_{\rm th}$. This suggests that if we can obtain the two parameters $\nu^*$ and $g^*_{\rm th}$, the explicit future channel
information in the $t+1$th, $\cdots, T$th time slots is no longer necessary.


\vspace{-1mm}
\section{Role of Large-scale Channel Information}\vspace{-1mm}
In this section, we show that the optimal water-filling level
$\nu^*$ and threshold $g^*_{\rm th}$ can be estimated accurately with the large-scale channel gains when the value of $T$ is large. This indicates that the large-scale channel information plays the key role
on the energy-saving predictive resource allocation. Specifically,
we analyze the estimation accuracy of $\nu^*$ and $g_{\rm th}^*$ when $\alpha^t,t=1,\ldots,{T_f}$ are known but
small-scale channels are unknown.

For channel vector $\bf h$ with elements subject to i.i.d. Rayleigh fading, the small-scale channel gain $\|{\bf h}^t\|^2$ follows
Gamma distribution with probability density function (PDF) as
\begin{equation}\label{E:fh}
\textstyle f_h(h) = \frac{e^{-h}h^{N_t-1}}{\Gamma(N_t)},
\end{equation}
where ${\Gamma(\cdot)}$ is the Gamma function.

The following proposition shows that the
distribution information for the equivalent channel gains in all the ${T = T_f T_s}$ time slots can be estimated with the large-scale channel gains.

\begin{proposition}\label{Pro:1}
	When $\alpha^j,j=1,\ldots,{T_f}$ are known and	$T_s \to \infty$, i.e., the small scale channels in each frame are ergodic, the set of equivalent channel gains $\{g^t=\alpha^t \|{\bf{h}}^{t}\|^2/\sigma^2, t=1, \ldots,{T}\}$ have the same elements but different orders as a set of $T$ i.i.d. random variables (denoted as $\tilde g^t,t=1,\ldots,T$) with PDF as
	\begin{equation}\label{E:PDFg}
\textstyle	f(g)=\frac{1}{T_f}
	\sum_{j=1}^{T_f}\frac{({\frac{\sigma^2}{\alpha^j}g})^{N_t -1}}{\Gamma(N_t)}
	e^{-\frac{\sigma^2}{\alpha^j}g},
	\end{equation}	
\end{proposition}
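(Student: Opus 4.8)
The plan is to exploit the block structure of the channel model. The $T=T_fT_s$ slots split into $T_f$ frames of equal length $T_s$, and inside the $j$th frame the index $\lceil t/T_s\rceil$ equals $j$, so the large-scale gain is frozen at $\alpha^j$ while the $T_s$ small-scale gains $\|{\bf h}^t\|^2$ are i.i.d.\ with the Gamma density \eqref{E:fh}. First I would identify the per-frame law of the equivalent gain: for $t$ in the $j$th frame, $g^t=\alpha^j\|{\bf h}^t\|^2/\sigma^2$ is a deterministic rescaling of a Gamma variable, so a one-line change of variables in \eqref{E:fh} turns it into a scaled-Gamma density, namely the $j$th summand of \eqref{E:PDFg}; call this per-frame density $f_j$.

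Next I would turn the ergodicity hypothesis ($T_s\to\infty$) into convergence of empirical laws via the Glivenko--Cantelli theorem: since the $T_s$ equivalent gains inside frame $j$ are i.i.d.\ with density $f_j$, their empirical distribution converges almost surely (uniformly at the level of CDFs) to $f_j$. Pooling the $T_f$ frames, and using that every frame contains exactly $T_s$ slots, the empirical distribution of the whole collection $\{g^t\}_{t=1}^{T}$ is the equal-weight mixture $\frac{1}{T_f}\sum_{j=1}^{T_f}(\text{empirical law of frame }j)$, which therefore converges to $\frac{1}{T_f}\sum_{j=1}^{T_f}f_j=f$, the density in \eqref{E:PDFg}. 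The uniform weights $1/T_f$ are exactly where the equal-frame-length assumption enters.

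Finally I would compare with a genuine i.i.d.\ sample $\tilde g^1,\ldots,\tilde g^T$ drawn from $f$: by Glivenko--Cantelli again, its empirical distribution also converges to $f$. Hence, in the limit $T_s\to\infty$ (with $T_f$ held fixed), the two collections are indistinguishable as unordered multisets --- equivalently, their sorted order-statistic sequences share the same limit --- while they differ only in the assignment of values to slot indices: in $\{g^t\}$ the values are grouped frame by frame, whereas in $\{\tilde g^t\}$ they appear in random order. Since the quantities used later --- the count $\sum_{t}{\bf 1}(g^t\ge g_{\rm th})$, the optimal threshold $g^*_{\rm th}$, and the sum $\sum_{\cal L}\ln g^t$ in \eqref{E:nu_x} --- depend on the gains only through this multiset, the replacement is harmless. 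I expect the main subtlety to be making ``same elements, different order'' precise: the clean route is convergence of empirical CDFs (together with continuity of the limiting CDF, so that empirical quantiles also converge), rather than any attempt at a pathwise identification, and throughout one should keep $T_f$ fixed while letting $T_s\to\infty$.
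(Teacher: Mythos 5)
Your argument is correct and reaches the paper's conclusion, but it formalizes the ``same elements, different order'' claim by a different technical route. The paper never touches empirical CDFs: it discretizes the range $[\alpha_{\min},\alpha_{\max}]$ of large-scale gains into $Q$ bins, introduces an auxiliary i.i.d.\ sequence $\beta^n$ uniform on $\{\alpha^1,\ldots,\alpha^{T_f}\}$, and invokes Borel's law of large numbers to argue that, as $T_s\to\infty$, the bin counts of the $\beta^n$ match the deterministic bin counts $T_s x_q$ of the per-slot large-scale gains; the multiset identification is thus made at the level of the large-scale gains alone, the i.i.d.\ small-scale gains are then attached, and the mixture density is obtained by conditioning on $\beta^n=\alpha^j$ --- which is your per-frame change of variables in disguise. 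You instead work directly with the equivalent gains: Glivenko--Cantelli within each frame, pooling of the per-frame empirical laws with weights $1/T_f$, and a second application of Glivenko--Cantelli to a genuine i.i.d.\ sample from $f$. Your version avoids the binning and the auxiliary sequence, makes the mode of convergence explicit (uniform convergence of CDFs, hence convergence of quantiles since the limiting CDF is continuous and strictly increasing on $(0,\infty)$), and, unlike the paper, states why the reordering is harmless: the quantities used downstream ($\kappa$, $g_{\rm th}$, $\sum_{\cal L}\ln g^t$) are symmetric functions of the gains. What the paper's route buys is a more literal multiset statement for the large-scale part; what yours buys is a cleaner, purely empirical-distribution argument with the ``order does not matter'' step made explicit. (Both versions, yours and the paper's, justify the identification only at the law-of-large-numbers level, which is all Proposition 1 claims.)

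One caveat: carried out carefully, your change of variables gives the per-frame density $\frac{\sigma^2}{\alpha^j}\,f_h\!\left(\frac{\sigma^2}{\alpha^j}g\right)$, i.e., each summand must carry the Jacobian factor $\frac{\sigma^2}{\alpha^j}$. The summands of \eqref{E:PDFg} as printed (and the last line of the paper's appendix derivation) omit this factor, so \eqref{E:PDFg} does not integrate to one; your method actually exposes this slip, so you should not identify your per-frame density with the printed $j$th summand verbatim.
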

\begin{proof}
	See Appendix \ref{Proof:Pro1}.
\end{proof}
 Note that the distribution information in \eqref{E:PDFg} only depends on where the user has been, but dose not depend on the time when the user is there. 

In practical systems, the large-scale channel gains can be estimated from the radio map \cite{abou2014toward} with the help of the predicted user location, which inevitably have estimation errors.
Fortunately, the imperfect large-scale channel information has little impact on the estimated distribution information
$f(g)$ in \eqref{E:PDFg}, as demonstrated via simulations later.


\vspace{-1mm}
\subsection{Estimation Accuracy with Large-scale Channel Gains}
Since the optimal threshold $g_{\rm th}^*$ and water-filling level
$\nu^*$ are obtained from problem
\eqref{P:original}  in two steps, we first analyze the accuracy of estimating these two parameters with large-scale channel information when the number of active time slots
$N$ is given, and then analyze the accuracy of estimating $N^*$ with large-scale channel information.

\subsubsection{Estimation accuracy of $g_{\rm th}$ and $\nu$ with given
	$N$} Define $\kappa
\triangleq N/T= \sum_{t=1}^{T} {\bf 1}(g^t\geq g_{\rm th})/T $ as the active
ratio of the $T$ time slots. Then, given
$N$ is the same as given $\kappa$. Further considering that $\tilde g^t,t=1,\ldots, T$ are  $g^t, t=1,\ldots, T$ in different orders, $\kappa = \sum_{t=1}^{T} {\bf 1}(\tilde g^t\geq g_{\rm th})/T$.
\begin{itemize}
	\item {\bf Estimation  accuracy of $g_{\rm th}$ with given $\kappa$}
	
	Since $g_{\rm th}$ is the threshold to select $T\kappa$ time slots with largest equivalent
	channel gains, it is the sample $[1-\kappa]$-quantile of
	population $\tilde g^t,t=1,\ldots,T$. According to \cite{Bahadur1966Note}, when
	$T_s \to \infty$ and hence $T= T_sT_f\to\infty$, $g_{\rm th}$ follows normal
	distribution as,\vspace{-1mm}
	\begin{align}\label{E:gth_distribution}
	g_{\rm th} \sim {\mathbb N}\left(\mu_{g_{\rm th}},\sigma^2_{g_{\rm th}}\right),
	\end{align}
	where the mean value $\mu_{g_{\rm th}} \triangleq g_{[1-\kappa]}$ is the
	$[1-\kappa]$-quantile of $\tilde g^t$ (i.e., $g_{[1-\kappa]}$ satisfies $
	\int_{g_{[1-\kappa]}}^\infty f(g){\rm d} g=\kappa $), and the variance
	$\sigma^2_{g_{\rm th}}=\frac{\kappa(1-\kappa)}{Tf^2(g_{[1-\kappa]})}$.
	
	When $\kappa$ is given,
	$\lim_{T\to\infty}\sigma^2_{g_{\rm th}}=0$. Then, we have \vspace{-1mm}
	\begin{equation}\label{E:limgth}
	\lim_{T\to\infty} g_{\rm th} = \mu_{g_{\rm th}},
	\end{equation}
	which can be determined when $\kappa$ and $f(g)$ are known. This implies that the threshold can be estimated with $\alpha^j,j=1,\ldots,{T_f}$ for a given value of $N$, and the estimation errors approach zero when
	$T \to \infty$.
	\item {\bf Estimation accuracy of $\nu$ with given $\kappa$}
	
	The water-filling level in \eqref{E:nu_x} depends on $L$, hence can be expressed as different forms according to the relation between $g_{\rm th}$ and $\frac{1}{\nu}$, as indicated in \emph{Remark 1}.

	{\bf{Case} 1}: When $\nu g_{\rm th}\geq 1$, $L=N$. Further considering that $\sum_{t=1}^T {\bf1}(\tilde g^t\geq
	g_{\rm th})= N=T\kappa$, the water-filling level  in \eqref{E:nu_x} can be derived  as \vspace{-2mm}
	\begin{align}\label{E:nu}
	\textstyle \nu = & \textstyle {\rm exp}\big(\frac{R}{N}-\frac{\sum_{t=1}^T
			\ln \tilde g^t {\bf 1}(\tilde g^t\geq g_{\rm th})}{N}\big)
	\nonumber \\
	=& \textstyle {\rm exp}({\frac{R}{T\kappa}-\frac{1}{T\kappa}\sum_{i=1}^{T\kappa}\ln  g^{[i]}}),
	\end{align}
	where $g^{[i]},i= 1,\ldots,T\kappa$ are the $T\kappa$ largest equivalent
	channel gains selected by threshold $g_{\rm th}$ from $\tilde g^t,t=1,\ldots,T$.
	Therefore, the PDF of $g^{[i]}$ is the conditional
	PDF of $\tilde g^t$ when $\tilde g^t\geq g_{\rm th}$, which is
	$\tilde f(g) = f(g){\bf 1}(\tilde g^t\geq g_{\rm th})/\int_{g_{\rm th}}^\infty f(g) {\rm d} g$.

	According to \eqref{E:limgth}, when  $T\to \infty$, the
	threshold $g_{\rm th}$ equals to $\mu_{g_{\rm th}} = g_{[1-\kappa]}$, where
	$\int_{g_{[1-\kappa]}}^\infty f(g) {\rm d} g=\kappa$. Then, the asymptotic
	PDF of $g^{[i]}$ can be derived  as\vspace{-2mm}
	\begin{equation}\label{E:pdf_gi}
\textstyle	\lim\limits_{T\to\infty}\tilde f(g)= \frac{1}{\kappa}f(g){\bf
		1}(g\geq g_{[1-\kappa]}).
	\end{equation}

By deriving the mean value and variance of $\frac{1}{T\kappa}\sum_{i=1}^{T\kappa}\ln g^{[i]}$, the water-filling level follows log-normal distribution as shown in the following proposition. The proof is omitted due to the lack of space.
    \begin{proposition}\label{Pro:2}
		When $T\to \infty$ and $\nu g_{\rm th}\geq 1$,
		$\nu$ has the following asymptotic distribution as\vspace{-2mm}
		\begin{equation}\label{E:nu_lognormal}
	\textstyle	\nu \sim \ln {\mathbb N}\big({\frac{R }{T\kappa}-{\mu_{\Phi_T}}},
		\sigma^2_{\Phi_T}\big),
		\end{equation}
		where $\mu_{\Phi_T}= \int_{g_{[1-\kappa]}}^\infty \ln g f(g)/\kappa{\rm
			d}g$, and $\sigma^2_{\Phi_T} = \big(\int_{g_{[1-\kappa]}}^\infty (\ln
		g)^2 f(g)/\kappa{\rm d}g-\mu_{\Phi_T}^2\big)/(T\kappa)$.
	\end{proposition}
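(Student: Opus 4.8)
The plan is to observe that, by \eqref{E:nu}, $\ln \nu = \frac{R}{T\kappa} - \Phi_T$ with $\Phi_T \triangleq \frac{1}{T\kappa}\sum_{i=1}^{T\kappa}\ln g^{[i]}$, where the offset $\frac{R}{T\kappa}$ is deterministic once $N=T\kappa$ is fixed and \eqref{E:nu} is the Case-1 expression valid under the hypothesis $\nu g_{\rm th}\ge 1$. Hence it suffices to show that $\Phi_T$ is asymptotically Gaussian with mean $\mu_{\Phi_T}$ and variance $\sigma^2_{\Phi_T}$: then $\ln\nu$ is asymptotically $\mathbb N\big(\frac{R}{T\kappa}-\mu_{\Phi_T},\sigma^2_{\Phi_T}\big)$ because an affine shift keeps a Gaussian Gaussian, and therefore $\nu=\exp(\ln\nu)$ is log-normal with exactly the parameters in \eqref{E:nu_lognormal}.

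First I would use \eqref{E:limgth} and \eqref{E:pdf_gi}: as $T\to\infty$ the threshold concentrates on its deterministic limit, $g_{\rm th}\to g_{[1-\kappa]}$, so the $T\kappa$ retained gains $g^{[i]}$ behave as i.i.d.\ draws from the truncated density $\frac{1}{\kappa}f(g){\bf 1}(g\ge g_{[1-\kappa]})$. Setting $Y_i\triangleq\ln g^{[i]}$, under this density $\mathbb E[Y_i]=\int_{g_{[1-\kappa]}}^\infty \ln g\,\frac{f(g)}{\kappa}\,{\rm d}g=\mu_{\Phi_T}$ and $\mathbb E[Y_i^2]=\int_{g_{[1-\kappa]}}^\infty (\ln g)^2\,\frac{f(g)}{\kappa}\,{\rm d}g$, whence $\mathrm{Var}(Y_i)=\mathbb E[Y_i^2]-\mu_{\Phi_T}^2$. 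These integrals are finite since $f$ in \eqref{E:PDFg} is a finite mixture of Gamma densities: at infinity the exponential decay dominates every power of $\ln g$, and near the lower limit the integrand is integrable (only the mildly singular $\ln g$ factor can appear, and only when $N_t=1$).

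Next I would apply the classical central limit theorem to the sample mean of the $T\kappa$ i.i.d.\ variables $Y_i$: $\Phi_T=\frac{1}{T\kappa}\sum_{i=1}^{T\kappa}Y_i$ is asymptotically $\mathbb N\big(\mu_{\Phi_T},\mathrm{Var}(Y_i)/(T\kappa)\big)$, and $\mathrm{Var}(Y_i)/(T\kappa)$ is precisely $\sigma^2_{\Phi_T}$. Combining with the first paragraph gives $\ln\nu\sim\mathbb N\big(\frac{R}{T\kappa}-\mu_{\Phi_T},\sigma^2_{\Phi_T}\big)$, hence $\nu\sim\ln\mathbb N\big(\frac{R}{T\kappa}-\mu_{\Phi_T},\sigma^2_{\Phi_T}\big)$, which is \eqref{E:nu_lognormal}.

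The delicate point is the i.i.d.\ reduction in the second paragraph: the $g^{[i]}$ are the top $T\kappa$ order statistics of $T$ i.i.d.\ draws, not literally independent samples from the truncated law, and the data-dependent threshold $g_{\rm th}$ is correlated with them. To make this rigorous one either conditions on $g_{\rm th}$ and controls the error of replacing it by $g_{[1-\kappa]}$, using the $O(1/T)$ fluctuation of $g_{\rm th}$ from \eqref{E:gth_distribution} to show this error is negligible on the CLT scale, or invokes the known asymptotic normality of trimmed sums / linear combinations of order statistics, for which $\frac{1}{T}\sum_{i=1}^{T\kappa}\ln g^{[i]}$ is a textbook instance and the limiting mean and variance coincide with those above. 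Once this is granted, the remaining moment evaluations and the Gaussian-to-log-normal transfer are routine.
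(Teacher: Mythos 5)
The paper never gives a proof of Proposition~\ref{Pro:2} --- it is explicitly omitted for space --- and the surrounding text only sketches the intended argument: write $\ln\nu = \frac{R}{T\kappa}-\frac{1}{T\kappa}\sum_i\ln g^{[i]}$ from \eqref{E:nu}, take the retained gains $g^{[i]}$ to be governed by the truncated density \eqref{E:pdf_gi}, compute the first two moments of $\ln g^{[i]}$, and invoke a CLT for the sample mean before exponentiating. Your first three paragraphs are exactly this argument, so your main line of reasoning coincides with the paper's (implicit) proof, and the moment computations and the Gaussian-to-log-normal step are fine.

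The problem is in your final paragraph, where you try to repair the i.i.d.-truncation heuristic. The fluctuation of the sample quantile $g_{\rm th}$ is $O_p(T^{-1/2})$ (its \emph{variance} in \eqref{E:gth_distribution} is $O(1/T)$), i.e.\ it lives on the same scale as the CLT fluctuations of $\Phi_T$, so it cannot be dismissed as negligible; and the rigorous CLT for the one-sided trimmed mean (the mean of the top $T\kappa$ order statistics) does \emph{not} reproduce the variance $\sigma^2_{\Phi_T}$ in \eqref{E:nu_lognormal}, but the strictly larger value $\sigma^2_{\Phi_T}+\frac{(1-\kappa)\big(\mu_{\Phi_T}-\ln g_{[1-\kappa]}\big)^2}{T\kappa}$, the extra term coming precisely from the randomness of the threshold (this is the standard influence-function variance of a quantile-trimmed mean). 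So your claim that the trimmed-sum/L-statistic CLT yields ``the limiting mean and variance coincide with those above'' is incorrect, and the conditioning argument you sketch would not close the gap either. Both variance expressions are $O(1/T)$, so the discrepancy is harmless for everything the paper actually uses downstream --- $\lim_{T\to\infty}\sigma^2_\nu=0$ and the concentration \eqref{E:limitv} --- but as a proof of the precise variance stated in \eqref{E:nu_lognormal}, your argument, like the paper's own sketch, rests on treating the $g^{[i]}$ as i.i.d.\ draws from \eqref{E:pdf_gi} with a deterministic threshold rather than on the exact order-statistics asymptotics.
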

	
From the property
	of log-normal distribution, the mean and variance of $\nu$ can
	be respectively derived as\vspace{-2mm}
	\begin{equation}\label{E:mu_nu}
\textstyle\!\!\!\!	\mu_{\nu}\! =\!
	e^{\frac{R}{T\kappa}-\mu_{\Phi_T}+\frac{\sigma^2_{\Phi_T}}{2}},
\sigma^2_{\nu}\! = \!\!(\!e^{\sigma^2_{\Phi_T}} \!-\!
	1\!)e^{\frac{2R}{T\kappa}- 2\mu_{\Phi_T}+{\sigma^2_{\Phi_T}}}.
	\end{equation}
	
	From \eqref{E:PDFg}, it it not hard to show that the integration
	$\int_{g_{[1-\kappa]}}^\infty (\ln  g)^2 f(g)/\kappa{\rm
		d}g-\mu_{\Phi_T}^2$ is finite. Further considering that $\lim_{T\to\infty} g_{\rm th} = g_{[1-\kappa]}$
	and $\lim_{T\to\infty} \sigma^2_{\Phi_T} = 0$, we have \vspace{-2mm}
	\begin{align}\nonumber
	\textstyle \lim\limits_{T\to\infty}\mu_{\nu} = \lim\limits_{T\to\infty}
	e^{\frac{R}{T\kappa}-\mu_{\Phi_T}}
	{~~\text {and}~~} \lim\limits_{T\to\infty}  \sigma^2_{\nu} = 0.
	\end{align}
	Then, the water-filling level $\nu$ with given $\kappa$ when $T$ approaches
	infinity can be derived as\vspace{-2mm}
	\begin{align}\label{E:limitv}
	\lim_{T\to\infty} \nu = \lim_{T\to\infty}\mu_{\nu} = e^{-\mu_{\Phi_T}},
	\end{align}
	which can be determined when $\kappa$ and $f(g)$ are known.
	
	{\bf{Case} 2}: When $\nu g_{\rm th}\leq1$, less than $T\kappa$ time slots are
	allocated with non-zero power as discussed in \emph{Remark 1}, and hence $\nu$ does not change as $\kappa$ increases. In this case, \eqref{P:1} can be expressed as  $\sum_{t=1}^{T}\ln (\nu
	 g^t){\bf 1}( g^t\geq \frac{1}{\nu})=\sum_{t=1}^{T}\ln (\nu
	\tilde g^t){\bf 1}(\tilde g^t\geq \frac{1}{\nu})=R$  after substituting \eqref{E:solution_x}. Since $g_{\rm th}$ is determined by $\kappa$ but $\nu$ does not, when $\kappa$ is large such that $\nu g_{\rm th}\leq1$, $\nu$ follows the distribution in \eqref{E:nu_lognormal} with $\nu g_{\rm th}=1$.

\end{itemize}
	The analysis for both  cases imply that the water-filling level can be estimated with $\alpha^j,j=1,\ldots,{T_f}$ for a given value of $N$, and the estimation errors approach zero when
	$T \to \infty$.
\subsubsection{Estimation Accuracy of $N^*$}
Because $N^*$ is found from minimizing the total energy consumption, its estimation accuracy depends on the accuracy of estimating the total power consumption for any given $N$, i.e., given $\kappa$.


From \eqref{E:PowerModel} and \eqref{E:solution_x}, the total
power consumption per time slot when the active ratio $\kappa$ is given  can be
derived as\vspace{-2mm}
\begin{equation}\label{E:Omega}
\textstyle \Omega \triangleq \frac{\sum^T_{t=1} p^t_{\rm tot}}{T} = \frac{1}{\xi}\Psi_p +
\kappa (p_{\rm act}- p_{\rm sle}) + p_{\rm sle},
\end{equation}
where  the transmit power per time slot is\vspace{-2mm}
\begin{equation}\label{E:TxP}
\textstyle\Psi_p\triangleq \frac{1}{T}{\sum\limits_{t=1}^T p^t} = \frac{1}{T}{\sum\limits_{t=1}^T (\nu-\frac{1}{\tilde g^t})
	{\bf1}(\tilde  g^t\geq \max \{g_{\rm th},\frac{1}{\nu}\})}.
\end{equation}

\begin{itemize}
	\item{\bf Estimation of $\Psi_p$ with given $\kappa$}
	
According to the relation between $g_{\rm th}$ and $\frac{1}{\nu}$, $\Psi_p$ has different forms.

	{\bf{Case} 1}:
    When $\nu g_{\rm th} \geq 1$,
	from \emph{Remark 1}, \eqref{E:TxP} becomes
	\begin{align}\label{E:Psi_p_gv>1}
\textstyle	\Psi_p =&\textstyle\frac{1}{T}\sum_{t=1}^T \nu {\bf 1}(\tilde g^t\geq g_{\rm th})- \frac{1}{T}\sum_{t=1}^T\frac{1}{\tilde g^t} {\bf 1}(\tilde g^t\geq g_{\rm th})
\nonumber \\=&\textstyle\kappa \nu - \frac{1}{T}\sum_{i=1}^{T\kappa} \frac{1}{g^{[i]}},
	\end{align}
	where $g^{[i]},i=1,\ldots,T\kappa$ are the $N=T\kappa$ largest
	equivalent channel gains selected by the threshold $g_{\rm th}$, whose
	asymptotic PDF is in \eqref{E:pdf_gi}.

When $T\to\infty$, we can obtain the following proposition, whose proof is omitted due to the lack of space.	
\begin{proposition}\label{Pro:3}
		When $T \to \infty$ and $\nu g_{\rm th}\geq 1$,
		the mean and variance
		of $\Psi_p$ are respectively
		\begin{align}\nonumber
		\lim_{T\to\infty} \mu_{\Psi_p} = \lim_{T\to\infty} \kappa \mu_{\nu} -\kappa \mu_{g} {\text {~~and~~}} \lim_{T\to \infty} \sigma^2_{\Psi_p} =0
		\end{align}
		where $\mu_{g} = \int_{g_{[1-\kappa]}}^\infty \frac{1}{g\kappa}f(g){\rm d} g$.		
	\end{proposition}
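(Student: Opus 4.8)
The plan is to begin from the Case~1 identity \eqref{E:Psi_p_gv>1}, $\Psi_p=\kappa\nu-\frac{1}{T}\sum_{i=1}^{T\kappa}\frac{1}{g^{[i]}}$, and to handle the two summands separately, since each becomes deterministic as $T\to\infty$. For the first summand I would invoke Proposition~\ref{Pro:2}: $\nu$ converges in distribution to a log-normal variable with mean $\mu_\nu$ from \eqref{E:mu_nu} and with $\sigma^2_\nu\to 0$, so that $\mathbb{E}\{\kappa\nu\}=\kappa\mu_\nu$ and ${\rm Var}(\kappa\nu)=\kappa^2\sigma^2_\nu\to 0$.

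For the second summand I would use the asymptotic PDF of the selected gains in \eqref{E:pdf_gi}, treating $g^{[1]},\ldots,g^{[T\kappa]}$ as i.i.d.\ in the limit with density $\frac{1}{\kappa}f(g){\bf 1}(g\geq g_{[1-\kappa]})$. Then $\mathbb{E}\{1/g^{[i]}\}=\int_{g_{[1-\kappa]}}^\infty\frac{1}{g\kappa}f(g)\,{\rm d}g=\mu_g$, which gives $\mathbb{E}\{\frac{1}{T}\sum_{i=1}^{T\kappa}\frac{1}{g^{[i]}}\}=\kappa\mu_g$, and by independence ${\rm Var}(\frac{1}{T}\sum_{i=1}^{T\kappa}\frac{1}{g^{[i]}})=\frac{\kappa}{T}{\rm Var}(1/g^{[i]})$. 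The moment ${\rm Var}(1/g^{[i]})$ is finite since $\mathbb{E}\{1/(g^{[i]})^2\}=\int_{g_{[1-\kappa]}}^\infty\frac{1}{g^2\kappa}f(g)\,{\rm d}g<\infty$, the integration range being bounded away from $0$ as $g_{[1-\kappa]}>0$ and $f$ in \eqref{E:PDFg} being bounded and integrable; hence this variance vanishes.

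Combining the two summands yields $\mathbb{E}\{\Psi_p\}\to\kappa\mu_\nu-\kappa\mu_g$, which is the claimed limit of $\mu_{\Psi_p}$. For the variance I would expand ${\rm Var}(\Psi_p)=\kappa^2\sigma^2_\nu+{\rm Var}(\frac{1}{T}\sum_{i=1}^{T\kappa}\frac{1}{g^{[i]}})-2\kappa\,{\rm Cov}(\nu,\frac{1}{T}\sum_{i=1}^{T\kappa}\frac{1}{g^{[i]}})$: the first two terms vanish as shown, and $|{\rm Cov}(\cdot,\cdot)|\leq\sqrt{\sigma^2_\nu\,{\rm Var}(\frac{1}{T}\sum_{i=1}^{T\kappa}1/g^{[i]})}\to 0$ by Cauchy--Schwarz, so $\sigma^2_{\Psi_p}\to 0$.

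The step I expect to require the most care is the reduction of the order statistics $g^{[i]}$ to i.i.d.\ samples from the conditional density \eqref{E:pdf_gi}: for finite $T$ the $g^{[i]}$ are the top-$T\kappa$ order statistics (hence dependent) selected by the random sample $[1-\kappa]$-quantile $g_{\rm th}$. The clean route is to write $\frac{1}{T}\sum_{i=1}^{T\kappa}\frac{1}{g^{[i]}}=\frac{1}{T}\sum_{t=1}^{T}\frac{1}{\tilde g^t}{\bf 1}(\tilde g^t\geq g_{\rm th})$ and to combine \eqref{E:limgth}, i.e. $g_{\rm th}\to g_{[1-\kappa]}$ in probability with vanishing variance, with a Glivenko--Cantelli / continuous-mapping argument for the empirical mean of $1/\tilde g^t$ over the thresholded set, the finite second moment above supplying the required uniform integrability. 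I would close by noting that $\mu_g$, and hence the whole limit of $\Psi_p$, depends only on $\kappa$ and $f(g)$, so that through \eqref{E:Omega} the per-slot total power for any given $N$ --- and therefore the optimizer $N^*$ --- can be estimated from the large-scale gains alone, which is the point of the proposition.
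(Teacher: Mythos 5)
Your proposal is correct and follows exactly the route the paper sets up for this result --- the decomposition \eqref{E:Psi_p_gv>1}, Proposition~\ref{Pro:2} for the $\kappa\nu$ term, and the asymptotic PDF \eqref{E:pdf_gi} for the empirical mean of $1/g^{[i]}$ --- which is evidently the intended argument, since the paper itself omits the proof ``due to the lack of space.'' Your extra care about the dependence of the top-$T\kappa$ order statistics on the random threshold $g_{\rm th}$, the finite-second-moment check (valid because $g_{\rm th}\geq 1/\nu>0$ in Case~1), and the Cauchy--Schwarz bound on the covariance between $\nu$ and the sum only make the argument tighter than the paper's sketch.
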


The proposition  indicates that the transmit power $\Psi_p$ can be estimated as $\mu_{\Psi_p}$ without errors when $T\to\infty$.
	
	{\bf{Case} 2}: When $\nu g_{\rm th} \leq 1$,
	from \emph{Remark 1}, \eqref{E:TxP} becomes
	\begin{align}\nonumber
	\textstyle\Psi_p =\frac{1}{T}\sum_{t=1}^T (\nu -\frac{1}{\tilde g^t} ){\bf 1}(\tilde g^t\geq \frac{1}{\nu}),
	\end{align}
	which does not depend on $g_{\rm th}$. Because the water-filling level $\nu$ does not depend on $\kappa$ in this case, $\Psi_p$ also does not depend on $\kappa$. Hence, the mean and variance of $\Psi_p$ are the same as those shown in Proposition 3 with $\nu g_{\rm th}= 1$.

\end{itemize}
 The analysis implies that $\Psi_p$ can be estimated accurately as its mean value  with $\alpha^j,j=1,\ldots,{T_f}$ for any given value of $\kappa$, and the estimation errors approach zero when
 $T \to \infty$.

When $\kappa$ is given, the circuit power $\kappa(p_{\rm act}-p_{\rm sle})+p_{\rm sle}$ in \eqref{E:Omega}  is fixed. Hence, when $T$ is large, the total power consumption per time slot can be estimated accurately with its mean value $\mu_{\Omega}$ as
\begin{equation}\label{E:limit_Omega}
\textstyle \lim\limits_{T\to\infty}\Omega =\lim\limits_{T\to\infty}\mu_{\Omega} = \lim\limits_{T\to\infty}\frac{1}{\xi}\mu_{\Psi_p}+\kappa(p_{\rm act}-p_{\rm sle})+p_{\rm sle}.
\end{equation}

Since when $\kappa$ increases, more time slots are employed to convey the $B$ bits, less transmit power needs to be used in each time slot. This means that
$
\textstyle\frac{\partial \Psi_p}{\partial \kappa}\leq 0
$. Recalling that $\Psi_p$ can be estimated as $\mu_{\Psi_p}$  when $T$ is large, this indicates that
$\mu_{\Psi_p}$ is a decreasing function of $\kappa$.
Further considering that the second term of \eqref{E:limit_Omega} is an increasing function of $\kappa$, the optimal active ratio $\kappa^* = N^*/T$ can be found from $\frac{\partial \Omega}{\partial \kappa }|_{\kappa=\kappa^*}= 0$.
With the accurately estimated value of $\Omega$, the optimal number of active time slots $N^*$ can be estimated accurately with $\alpha^j,j=1,\ldots,{T_f}$ when $T$ is large.

\subsection{Impact of Not-so-long Deadline}
When the value of $T$ is finite, simply estimating $\nu^*$ and $g_{\rm th}^*$ as their mean values are not accurate. Intuitively, if the estimated water-filling
level is less than $\nu^*$ or the estimated threshold  is larger than $g_{\rm th}^*$, the $B$ bits can not be conveyed within the $T$ time slots. To transmit the $B$ bits before the deadline with high probability, we can estimate $\nu^*$ and $g_{\rm th}^*$ in the following way.


For a normal distributed random variable,  $97.5\%$ of its values are less than two standard deviations
from its mean value. Considering that the threshold asymptotically follows normal
distribution and the water-filling level
asymptotically follows log-normal distribution, we have\vspace{-2mm}
\begin{align}\label{E:estimated_nugth}
&\textstyle\hat g_{\rm th}^*
\!\triangleq\!\mu_{g_{\rm th}}\!\!-\!\!2\sigma_{g_{\rm th}}|_{\kappa=\kappa^*}\!\! =\!\! g_{[1-\kappa^*]}\!\!-\!\!2\left(\!\frac{\kappa^*(1-\kappa^*)}{Tf^2(g_{[1-\kappa^*]})}\!\right)^{\frac{1}{2}}\!\!\overset{97.5\%}{\leq}\!\!g_{\rm th}^*,\nonumber\\
&\hat \nu^*  \triangleq e^{\frac{R}{T\kappa}-\mu_{\Phi_T}+ 2\sigma_{\Phi_T}}|_{\kappa=\kappa^*}\overset{97.5\%}{\geq}\nu^*,
\end{align}
where $\overset{97.5\%}{\geq}$ denotes that the  probability of the inequality being true is larger than $97.5\%$.
By using the threshold and water-filling level being estimated with only large-scale channel gains in this way,  the $B$ bits can be transmitted within the $T$ time slots with high probability no less than $(97.5\%)^2 =95.06\% $.

\section{Numerical and Simulation Results}

In this section, we first validate the analyses and then evaluate the
energy consumption of the resource allocation with perfect  and
imperfect large-scale channel information.

We consider a multi-cell system with cell radius $D = 250$ m, $N_t=4$. A
mobile user with speed uniformly distributed in $(0,20)$ m/s requests $B=2$ GBits in $T_f=120$ frames.
Each frame contains
$T_s=100$ (or 1000) time slots each with duration $\Delta_t =10$ (or 1) ms, i.e., the duration of a frame is 1 s.
The maximal transmit
power is $40$ W, the bandwidth is $10$ MHz, and $\sigma^2 =-95 $ dBm.  The path loss
model is $35.3+37.6\log_{10}(d)$, where $d$ is the distance between the BS
and user in meter \cite{TR36.814}. The circuit
power consumption parameters are $p_{\rm act}$ = $233.2$ W, $p_{\rm sle}$
=$150$ W, and $\xi=21.3\%$, which are for a macro BS \cite{Auer2011}. The results are obtained from $1000$ Monte Carlo trails, where the moving trajectory stays the same but the small-scale fading channel is subject to i.i.d. Rayleigh block fading.
Unless otherwise specified, this simulation setup is
used for all results.
\subsection{Validation of the Analysis}
We first validate proposition 1. The real and estimated trajectory of the mobile user are shown in Fig.
\ref{F:trajectorya}. To model a mobile user moving along a road, the real trajectory is generated as a straight line whose minimum distance from the BSs is $150$ m. To model the behavior of a mobile user who may frequently change lanes during the $T$ time slots, the estimated trajectory is generated as a cosine function with amplitude $A_d=5$ m and cycle $\pi=3.14$ seconds. The large-scale channel gains are computed from the distance $d$ with the assumed path loss model. Then, the value of $A_d$ can reflect the estimation errors of the user location, which leads to the estimation errors of the large-scale channel gains. The PDFs computed from \eqref{E:PDFg} with the accurate and estimated large-scale channel gains are shown in Fig.
\ref{F:trajectoryb}, where
the PDF obtained from estimated large-scale fading gains is
with legend ``EST". We also provide the histogram of the estimated large-scale fading
gains obtained from simulation. The results indicate that  the channel distribution $f(g)$ can be obtained from
large-scale channel gains accurately even when the large-scale information are imperfect.

\begin{figure} 
	\centering
	\subfigure[ True and estimated trajectories  and the corresponding distances]{
		\label{F:trajectorya} 
		\includegraphics[width=0.42\textwidth]{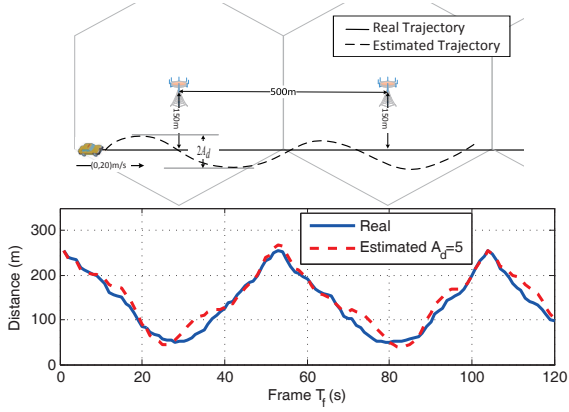}}
	\hspace{0.1cm}
	\subfigure[PDF of the equivalent channel gains]{
		\label{F:trajectoryb} 
		\includegraphics[width=0.42\textwidth]{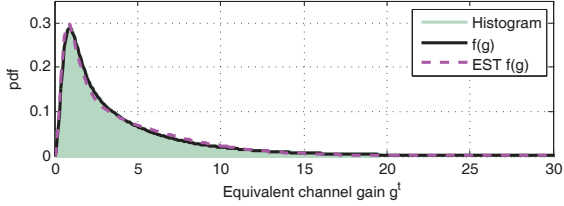}}
	\caption{PDF of $g^t$ with accurate and estimated large-scale fading gains for the true and estimated trajectories, $T_s=100$. }
	\label{F:trajectory} 
	\vspace{-4mm}
\end{figure}

To validate  \eqref{E:gth_distribution}  and proposition \ref{Pro:2}, we simulate the mean values of threshold $g_{\rm
	th}$ and water-filling level $\nu$ with different $\kappa$, and compare with $\mu_{g_{\rm th}}$ and $\mu_{\nu}$
numerically
 obtained from \eqref{E:gth_distribution} and \eqref{E:mu_nu} in Fig. \ref{F:1}(a) and
\ref{F:1}(b). We also provide the simulated standard deviations of the estimated
threshold and water-filling level when the numbers of time slots in each frame are respectively $T_s =100$ and $T_s
=1000$, as shown with the blue and green curves in the
magnified window. In Table \ref{T:1}, we further provide the deviation of $g^*_{\rm th}$ from
$\mu_{g^*_{\rm th}}$ and $\nu^*$ from
$\mu_{\nu^*}$. It can be seen that the deviation from the mean
value is small. Considering that in practice the large-scale channel gains vary in the scale of second and the small-scale channel gains change in the scale of milliseconds, this result  indicates that the threshold and water-filling level can be estimated with the large-scale channel gains accurately.

\begin{figure}
	\centering
	\includegraphics[width=0.41\textwidth]{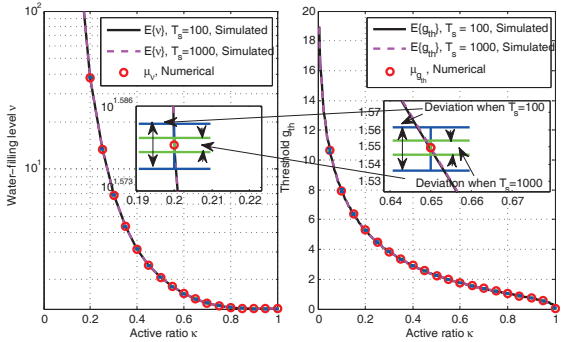}
	\caption{Simulated and numerically obtained threshold and water-filling level.}
	\label{F:1}
\end{figure}

To validate proposition \ref{Pro:3}, we simulate the mean value of the
transmit power per time slot  $\Psi_p$ and the total power
consumption per time slot  $\Omega$, and compare with analytical results $\mu_{\Psi_p}$ and $\mu_{\Omega}$ in Fig. \ref{F:2}. We can see that the analytical results
perfectly match the simulated results. In the magnified window, we show the deviation of $\Psi_p$ from its mean value when given $\kappa$, which is very
small and decreases when $T_s$ increases. This suggests that $\Omega$ can be estimated with large-scale channel gains accurately. Moreover, $\Omega$ first decreases and then increases with $\kappa$, which validates that there exists optimal active
ratio $\kappa^*$ minimizing the total power consumption and can be found by
setting $\partial \Omega/\partial \kappa |_{\kappa=\kappa^*}= 0$.

\begin{figure}
	\centering
	\includegraphics[width=0.43\textwidth]{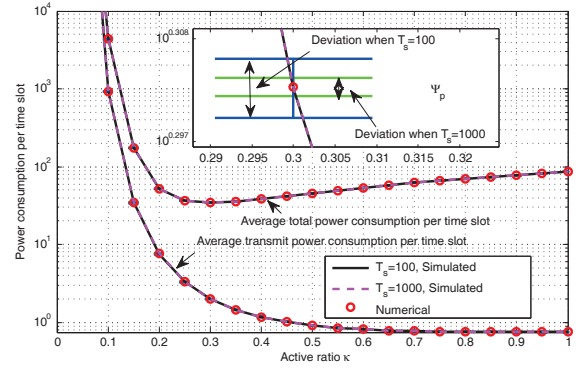}
	\caption{Power consumption per time slot.}
	\label{F:2}
	\vspace{-6mm}
\end{figure}

\begin{table}[!hbp]
	\centering \caption{Deviation of the Optimal Water-filling Level and Threshold from Mean Value Estimation}\label{T:1}
	\begin{tabular}{c|c|c}
		\hline	
		\hline
		Deviation	  & $T_s = 100$  & $T_s = 1000$ \\
		\hline
		${|\nu^*-\mu_{\nu^*}|}/{\mu_{\nu^*}}$ & $<1\%$ & $<0.3\%$  \\
		\hline
		${|g^*_{\rm th}-\mu_{g^*_{\rm th}}|}/{\mu_{g^*_{\rm th}}}$ & $<3\% $& $<1\%$\\
		\hline
		\hline
	\end{tabular}
	\vspace{-5mm}
\end{table}

\subsection{ Evaluation of the Energy Consumption}

To show the impact of only using future large-scale channel gains on the energy-saving predictive resource allocation, we have simulated the following methods.
\begin{itemize}
	\item \emph{SE-maximizing only with $g^t$} (with legend
	``SE"): The closest BS serves the user with the maximal transmit power, which can maximize the spectrum efficiency (SE) in each time slot \cite{yao2015context}.
	\item \emph{EE-maximizing  only with $g^t$} (with legend
	``EE"): The closest BS serves the user with the optimized transmit power to maximize the EE in each time slot \cite{yao2015context}.
	\item \emph{Power allocation with perfect future channel information} (with legend
	``UB"): The closest BS allocates transmit power using \eqref{E:solution_x} with $\nu^*$ and $g_{\rm th}^*$, which consumes minimal consumption to convey the $B$ bits before the deadline with duration $T\Delta_t$.
	\item \emph{Power allocation with future large-scale channel
		information} (with legend ``$A_d$"): Considering that $T_s$ is finite in the simulation, we use the conservative way to estimate the water-filling level and threshold with the large-scale channel gains in order to complete the transmission of the $B$ bits  during the $T$ time slots. Specifically, when $A_d=0$, the BS allocates transmit power by using the estimated
	water-filling level and threshold in \eqref{E:estimated_nugth} with perfect future large-scale channel
	information. When $A_d=5$ and $A_d=10$, the BS allocates power with \eqref{E:estimated_nugth} by using the estimated future large-scale channel
	information.
\end{itemize}
If the $B$ bits can not be transmitted before the deadline with duration  $T\Delta_t$, the remaining bits
will be transmitted with the maximal transmit power.

In Fig. \ref{F:5}, we provide the energy consumed by different methods for transmitting the $B$ bits during the $T$ time slots. It can be seen that knowing the accurate large-scale channel information can achieve almost the same performance as knowing all the future channel information. Inaccurate large-scale channel information leads to more energy to transmit the $B$ bits, as shown in the magnified window, but the increased energy is not significant. Without exploiting the future large-scale channel information, the ``SE'' and ``EE'' methods consume much more energy to transmit the $B$ bits in duration $T\Delta_t$. Again, this validates that the large-scale channel information plays the key role
on the energy-saving predictive resource allocation.

\begin{figure}
\centering
\includegraphics[width=0.43\textwidth]{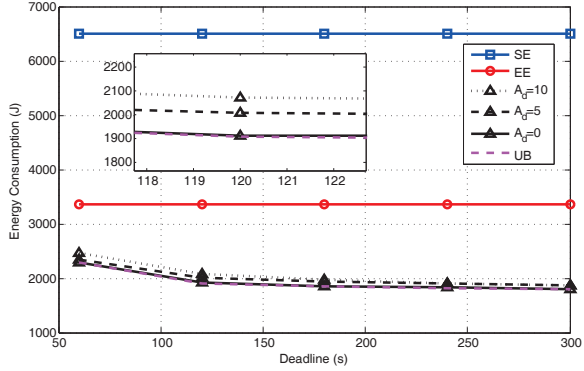}
\caption{Energy consumption vs. deadline, $T_s=100$.
}
\label{F:5}
\vspace{-6mm}
\end{figure}

\section{Conclusion}
\vspace{-2mm}
In this paper, we strived to show that large-scale channel information plays the key role on predictive resource allocation for delay tolerant services. Toward this goal, we formulated a time slot scheduling and power allocation problem minimizing the total energy consumed for conveying a given number of bits before a long deadline, where the large and small-scale channel gains are perfectly known for all time slots. We showed that only the water-filling level and threshold in the optimal solution contains future information, which can be estimated accurately with large-scale channel gains when the small-scale channels in each frame are ergodic.
Simulation results validated our analysis and showed that the estimation errors of the large-scale channel gains have little impact on energy saving.

\vspace{-2mm}
\appendices
\section{Proof of proposition \ref{Pro:1}}\label{Proof:Pro1}
Without loss of generality, we assume that the values of large-scale fading
gains $\alpha^j, j=1,\ldots,T_f$ are within a range of
$[\alpha_{\min},\alpha_{\max}]$. We divided the range into $Q$ intervals as
$\delta_1,\ldots,\delta_{Q}$, where $\delta_q =
[\alpha_{\min}+(q-1)\delta,\alpha_{\min}+q\delta]$, and $\delta =
(\alpha_{\max}-\alpha_{\min})/Q$. The number of $ \alpha^j,j=1,\ldots,T_f$
whose values are within the interval $\delta_q$ is denoted as $x_q$. Since each frame
contains $T_s$ time slots with the same large-scale fading gain, the number of $ \alpha^j,j=1,\ldots,T$ whose values are within the interval $\delta_q$
is $T_sx_q$.

Consider a sequence of i.i.d random variables $\beta^n,n=1,\ldots,T_sT_f$, whose
probability mass function is ${\rm Pr}(\beta^n = \alpha^j) = \frac{1}{T_f},j
=1,\ldots,T_f$. Then, the probability that the value of $\beta^n$ is within the
interval of $\delta_q$ is ${\rm Pr}(\beta^n\in\delta_q)=x_q/T_f$. When $T_s \to
\infty$, based on the Borel's law of large numbers, the number of $
\beta^n$ within the interval of $\delta_q$ (denoted as $z_q$) approaches its average number, i.e., $\lim_{T_sT_f\to\infty} z_q = {\mathbb E}\{z_q\} =
{\rm Pr}(\beta^n\in \delta_q) T_sT_f
= T_s x_q$. Therefore, the set $\{\beta^n,n=1,\ldots,T_sT_f\}$ has the same
elements as the large-scale fading gains of all the time slots but with different orders.

Further considering that the small-scale fading gains $\|{\bf h}^t\|^2,t = 1\ldots,T$ are i.i.d., the set of equivalent
channel gains $\{g^t = \alpha^{\lceil\frac{t}{T_s}\rceil}
\|{\bf h}^t\|^2 /\sigma^2,t=1,\ldots,T\}$ are the same as $\{\beta^n\|{\bf h}^t\|^2
/\sigma^2, n =1\ldots,T,t = 1\ldots,T \}$. Therefore,
the PDF of $g\triangleq\beta^n\|{\bf h}^t\|^2
/\sigma^2$ can be derived as\vspace{-2mm}
\begin{align}\label{E:Pr}
\!\!\!\!f(g) = \textstyle \lim\limits_{\Delta\to 0}\frac{{\rm Pr}(g<\frac{\beta^n\|{\bf h}^t\|^2
	}{\sigma^2}\leq g+\Delta)}{\Delta}.
\end{align}
Because $\|{\bf h}^t\|^2$ follows Gamma distribution
with PDF in \eqref{E:fh}, \eqref{E:Pr} can be
further derived as\vspace{-2mm}
\begin{align}\label{E:Pr_geq}
 &\textstyle \lim\limits_{\Delta\to 0}\frac{ {\rm Pr}(g<\frac{\beta^n\|{\bf h}^t\|^2}{\sigma^2}\leq g+\Delta)}{\Delta}
\nonumber\\ =&\textstyle\lim\limits_{\Delta\to 0}\frac{\sum_{j=1}^{T_f} {\rm Pr}(\beta^n  = \alpha^j )
	{\rm Pr}({ \frac{\sigma^2}{\alpha^j}g}<\|{\bf h}^t\|^2\leq { \frac{\sigma^2}{\alpha^j}(g+\Delta)} )}{\Delta}
\nonumber\\ = &\textstyle  \frac{1}{T_f} \sum_{j=1}^{T_f}f_h(\frac{\sigma^2}{\alpha^j}g ),
\end{align}
which can be expressed as \eqref{E:PDFg}.
\vspace{-2mm}
\bibliographystyle{IEEEbib}
\bibliography{IEEEabrv,2015YCT}

\begin{thebibliography}{10}
\providecommand{\url}[1]{#1}
\csname url@samestyle\endcsname
\providecommand{\newblock}{\relax}
\providecommand{\bibinfo}[2]{#2}
\providecommand{\BIBentrySTDinterwordspacing}{\spaceskip=0pt\relax}
\providecommand{\BIBentryALTinterwordstretchfactor}{4}
\providecommand{\BIBentryALTinterwordspacing}{\spaceskip=\fontdimen2\font plus
\BIBentryALTinterwordstretchfactor\fontdimen3\font minus
  \fontdimen4\font\relax}
\providecommand{\BIBforeignlanguage}[2]{{%
\expandafter\ifx\csname l@#1\endcsname\relax
\typeout{** WARNING: IEEEtran.bst: No hyphenation pattern has been}%
\typeout{** loaded for the language `#1'. Using the pattern for}%
\typeout{** the default language instead.}%
\else
\language=\csname l@#1\endcsname
\fi
#2}}
\providecommand{\BIBdecl}{\relax}
\BIBdecl

\bibitem{Abou2013Predictive}
H.~Abou-zeid and H.~Hassanein, ``Predictive green wireless access: exploiting
  mobility and application information,'' \emph{IEEE Wireless Commun.},
  vol.~20, no.~5, pp. 92--99, Oct. 2013.

\bibitem{abou2014toward}
H.~Abou-Zeid and H.~S. Hassanein, ``Toward green media delivery: location-aware
  opportunities and approaches,'' \emph{IEEE Wireless Commun.}, vol.~21, no.~4,
  pp. 38--46, Aug. 2014.

\bibitem{Abou2014Energy}
H.~Abou-zeid, H.~Hassanein, and S.~Valentin, ``Energy-efficient adaptive video
  transmission: Exploiting rate predictions in wireless networks,'' \emph{IEEE
  Trans. Veh. Technol.}, vol.~63, no.~5, pp. 2013--2026, June 2014.

\bibitem{Draxler2014Anticipatory}
M.~Dr{\"a}xler, P.~Dreimann, and H.~Karl, ``Anticipatory power cycling of
  mobile network equipment for high demand multimedia traffic,'' in \emph{IEEE
  GREENCOM}, 2014.

\bibitem{yao2015context}
C.~Yao, C.~Yang, and Z.~Xiong, ``Power-saving resource allocation by exploiting
  the context information,'' in \emph{IEEE PIMRC}, 2015.

\bibitem{Choongul2012concept}
C.~Park, Y.~Seo, K.~Park, and Y.~Lee, ``The concept and realization of
  context-based content delivery of {NGSON},'' \emph{IEEE Commun. Mag.},
  vol.~50, no.~1, pp. 74--81, Jan. 2012.

\bibitem{Skog2009Intelligent}
I.~Skog and P.~Handel, ``In-car positioning and navigation technologies--a
  survey,'' \emph{IEEE Trans. Intell. Transportation Sys.}, vol.~10, no.~1, pp.
  4--21, March 2009.

\bibitem{Zheng2013Optimizing}
Z.~Lu and G.~de~Veciana, ``Optimizing stored video delivery for mobile
  networks: The value of knowing the future,'' in \emph{IEEE INFOCOM}, April
  2013, pp. 2706--2714.

\bibitem{Riiser2012Video}
H.~Riiser, T.~Endestad, P.~Vigmostad, C.~Griwodz, and P.~Halvorsen, ``Video
  streaming using a location-based bandwidth-lookup service for bitrate
  planning,'' \emph{ACM Trans. Multimedia Comput. Commun. Appl.}, vol.~8,
  no.~3, pp. 24:1--24:19, Aug. 2012.

\bibitem{Abou2015Evaluating}
H.~Abou-zeid, H.~Hassanein, Z.~Tanveer, and N.~AbuAli, ``Evaluating mobile
  signal and location predictability along public transportation routes,'' in
  \emph{IEEE WCNC}, 2015.

\bibitem{Bahadur1966Note}
R.~R. Bahadur, ``A note on quantiles in large samples,'' \emph{The Annals of
  Mathematical Statistics}, vol.~37, no.~3, pp. pp. 577--580, June 1966.

\bibitem{TR36.814}
{TR 36.814 V1.2.0}, ``{F}urther {A}dvancements for {E-UTRA} {P}hysical {L}ayer
  {A}spects ({R}elease 9),'' \emph{3GPP}, June 2009.

\bibitem{Auer2011}
G.~Auer, V.~Giannini, C.~Desset, and e.~I.~Godor, ``How much energy is needed
  to run a wireless network?'' \emph{IEEE Wireless Commun.}, vol.~18, no.~5,
  pp. 40--49, Oct. 2011.

\end{thebibliography}
\end{document}